\documentclass[conference]{IEEEtran}
\IEEEoverridecommandlockouts
\usepackage{graphicx} 
\usepackage{cite}     
\usepackage{amsmath}  
\usepackage{float}
\usepackage{stfloats}
\usepackage{amsthm}

\newtheorem{lemma}{Lemma}

\newtheorem{remark}{Remark}
\usepackage{amsfonts} 
\usepackage{amssymb}  
\usepackage{algorithm}
\usepackage{algpseudocode}
\usepackage{multirow}
\usepackage{color}
\usepackage{comment}
\usepackage[none]{hyphenat}
\usepackage{balance}

\begin{document}

\title{Clique-Based Deletion-Correcting Codes via Penalty-Guided Clique Search}

\author{\IEEEauthorblockN{Aniruddh Pandav and Rajshekhar V Bhat\\}
\IEEEauthorblockA{Indian Institute of Technology Dharwad, Dharwad, India\\
Email: \{ee23bt031, rajshekhar.bhat\}@iitdh.ac.in}}

\maketitle

\begin{abstract}
We study the construction of $d$-deletion-correcting binary codes by formulating the problem as a Maximum Clique Problem (MCP). In this formulation, vertices represent candidate codewords and edges connect pairs whose longest common subsequence (LCS) distance guarantees correction of up to $d$ deletions. A valid codebook corresponds to a clique in the resulting graph, and finding the largest codebook is equivalent to identifying a maximum clique. While MCP-based formulations for deletion-correcting codes have previously been explored, we demonstrate that applying \emph{Penalty-Guided Clique Search (PGCS)}, a lightweight stochastic clique-search heuristic inspired by Dynamic Local Search (DLS), consistently yields larger codebooks than existing graph-based heuristics, including minimum-degree and coloring methods, for block lengths $n \in \{8,9,\ldots,14\}$ and deletion parameters $d \in \{1,2,3\}$. In several finite-length regimes, the resulting codebooks match known optimal sizes and outperform classical constructions such as Helberg codes. For decoding under segmented reception, where codeword boundaries are known, we propose an optimized LCS-based decoder that exploits symbol-count filtering and early termination to substantially reduce the number of LCS evaluations while preserving exact decoding guarantees. These optimizations lead to significantly lower average-case decoding complexity than the baseline
$O(|\mathcal{C}|n^2)$ approach.
\end{abstract}

\section{Introduction}
\label{Introduction}
Deletion channels, where transmitted symbols are removed at unknown locations, are increasingly relevant in modern communication systems. In high-rate transmissions, precise symbol timing becomes critical as symbol durations shrink; synchronization errors or timing drifts can cause receivers to miss symbols, effectively leading to deletions \cite{symbol_sync}. Similar effects arise in other contexts such as wireless packet networks, molecular communication, and data storage \cite{DNA_storage}. Unlike substitution or erasure, deletions disrupt sequence alignment between the transmitter and receiver, complicating both code design and decoding.

The goal of this work is to construct deletion-correcting codebooks that remain robust against up to $d$ deletions in length-$n$ binary sequences. We formulate this construction as a Maximum Clique Problem (MCP), following the approach first explored in~\cite{Compare}. In this formulation, a graph is considered in which each vertex represents a binary sequence of length $n$, making it a candidate codeword from the $2^n$ possible sequences. An edge connects two vertices if the corresponding codewords are mutually correctable; specifically, if their longest common subsequence (LCS) distance is sufficiently large such that they remain distinguishable after up to $d$ deletions. Each clique in this graph (a subgraph in which every pair of vertices is connected) corresponds to a valid codebook, and finding the largest feasible codebook is equivalent to identifying a maximum clique.

Prior work on deletion-correcting codes has explored both algebraic and heuristic methods. The Varshamov--Tenengolts (VT) codes~\cite{VT_original} achieve optimal single-deletion correction with $\log_2(n+1)$ bits of redundancy. Efficient systematic encoding algorithms and non-binary extensions of VT codes have also been studied in~\cite{VT_code}. Extensions to multiple deletions include Helberg codes~\cite{Helberg} and their non-binary generalizations~\cite{LeNyugen}. Graph-based formulations have also emerged: a graph coloring approach~\cite{coloring} proved VT codes optimal in a coloring sense and demonstrated constructions from constant-weight subgraphs, while a maximal independent set heuristic~\cite{Compare} was proposed for finding deletion-correcting codes.

In contrast to earlier graph-based heuristics, we employ
\emph{Penalty-Guided Clique Search (PGCS)}, a lightweight stochastic heuristic
for finding large cliques in deletion-compatibility graphs. PGCS employs a
dynamic penalty mechanism inspired by Dynamic Local Search (DLS)~\cite{DLS},
combined with greedy clique expansion and penalty-driven perturbations, while
avoiding exhaustive enumeration and deterministic branch-and-bound strategies. A key advantage of PGCS is its explicit control over the computational budget:
the algorithm can be configured to run for a fixed number of iterations or up to
a prescribed time limit, enabling predictable and resource-efficient execution.
This property makes PGCS particularly attractive for moderate-scale code
construction tasks, where exact graph-search methods, such as the
Bron--Kerbosch algorithm, quickly become computationally prohibitive. To the
best of our knowledge, this work represents the first application of a
penalty-guided stochastic clique-search heuristic inspired by DLS to the
construction of deletion-correcting codes. Our experimental results demonstrate
that PGCS consistently discovers larger cliques (i.e., codebooks) than existing
heuristic methods for given $(n,d)$ parameters.

For decoding, we build upon LCS-based approaches demonstrated in DNA storage reconstruction algorithms~\cite{DNA_sequence}. Previous studies on segmented error-correction~\cite{segmented_decoding, DNA_sync, Marker_code_marker,DNA_deletion} have shown that leveraging known codeword segmentation and synchronization markers simplifies the decoding process and reduces alignment ambiguity. Inspired by these works, we refine the LCS-based decoding framework by incorporating a lightweight candidate elimination step guided by simple codeword statistics. This optimization substantially reduces computational and memory overheads without compromising reconstruction accuracy, under the assumption that the receiver processes each codeword independently.

The main contributions of this work are as follows:
\begin{itemize}
 \item We propose  PGCS, a stochastic heuristic inspired by DLS~\cite{DLS} that combines
greedy clique expansion with lightweight penalty-driven perturbations to
efficiently find large cliques, yielding larger codebooks than existing
clique-based construction methods.

    \item We propose an optimized LCS-based decoding algorithm that computes the longest common subsequence between the received sequence and candidate codewords. Unlike brute-force approaches that enumerate all  $\binom{n}{d}$ deletion patterns, LCS-based decoding operates in $O(|\mathcal{C}| n^2)$ time, where $|\mathcal{C}|$ is the cardinality of the codebook. Our method further reduces this cost by first eliminating unlikely codewords using lightweight symbol-level checks, thereby decreasing the number of required LCS computations without compromising decoding accuracy.

\item Our experimental results demonstrate that the proposed PGCS-based
construction produces larger codebooks than existing clique-based methods
across block lengths $n \in \{8, 9, \ldots, 14\}$ and deletion capacities
$d \in \{1, 2, 3\}$, and matches provably optimal codebook sizes in some
instances.

\end{itemize}

\section{Problem Formulation}
\label{Problem}
Let $n$ be a positive integer and $\mathcal{S}=\{0,1\}^n$ the set of all length-$n$ binary sequences.
The deletion channel with at most $d$ deletions maps any transmitted codeword
$u\in\mathcal{S}$ to a subsequence of length $n-t$ for some $t$ with $0\le t\le d$.
For each $u\in\mathcal{S}$, define the set of possible received subsequences
\begin{multline}
\mathcal{D}_d(u)
= \bigcup_{t=0}^{d}
\Bigl\{
y\in\{0,1\}^{\,n-t}:\;
y\text{ is a subsequence of }u\\
\text{formed by deleting }t\text{ symbols}
\Bigr\}.
\end{multline}

A \emph{codebook} $\mathcal{C}\subseteq\mathcal{S}$ of size $M = |\mathcal{C}|$ is paired with an \emph{encoder} 
\begin{equation}
\psi: \{1, 2, \ldots, M\} \to \mathcal{C}
\end{equation}
that maps each message index to a distinct codeword. We require $\mathcal{C}$ to be \emph{$d$-deletion-correcting}, i.e., no received sequence may arise from two distinct codewords:
\begin{equation}
\label{deletion_correction}
\mathcal{D}_d(u)\cap\mathcal{D}_d(v)=\varnothing
\quad\forall\; u\neq v\in\mathcal{C}.
\end{equation}

Let $\mathcal{Y} = \bigcup_{u\in\mathcal{C}} \mathcal{D}_d(u)$ denote the set of all possible received sequences.
A \emph{decoder} is a mapping
\begin{equation}
\phi:\mathcal{Y}\to\mathcal{C}
\end{equation}
that recovers the transmitted codeword from the received sequence.
Under the $d$-deletion-correcting constraint \eqref{deletion_correction}, perfect decoding is achievable: for any transmitted codeword $u\in\mathcal{C}$ and any $y\in\mathcal{D}_d(u)$,
\begin{equation}
\label{perfect_decoding}
\phi(y) = u.
\end{equation}
The constraint \eqref{deletion_correction} ensures that a unique decoder exists by construction: each received sequence $y \in \mathcal{Y}$ corresponds to exactly one codeword in $\mathcal{C}$. We present an efficient implementation of such a decoder in Section~\ref{sec:decoding}.

Our objectives are twofold: first, to find the largest $d$-deletion-correcting codebook
\begin{equation}
\label{opt_problem}
\max_{\mathcal{C}\subseteq\mathcal{S}} \; |\mathcal{C}|
\quad\text{subject to}\quad
\mathcal{D}_d(u)\cap\mathcal{D}_d(v)=\varnothing\;\; \forall\; u\neq v\in\mathcal{C},
\end{equation}
and second, to construct an efficient decoder $\phi$ satisfying \eqref{perfect_decoding}.

\section{Solution}
\label{sec:solution}
We solve the optimization problem \eqref{opt_problem} by casting it as a Maximum Clique Problem (MCP) on an appropriately constructed graph. 
We first establish the necessary terminology and then present our construction and decoding algorithms.

\subsection{Preliminaries}

\subsubsection{Longest Common Subsequence}

Given two sequences \(x \in \{0,1\}^m\) and \(y \in \{0,1\}^n\), their \emph{longest common subsequence} (LCS) is the longest sequence obtainable from both \(x\) and \(y\) by deleting zero or more bits without reordering. 
Let \(\mathrm{LCS}(x,y)\) denote the length of an LCS of \(x\) and \(y\), computable in \(O(mn)\) time via dynamic programming.

The \emph{LCS distance} between \(x\) and \(y\) is defined as
\begin{equation}
d_{\mathrm{LCS}}(x,y) = \frac{|x| + |y| - 2 \cdot \mathrm{LCS}(x,y)}{2}.
\label{eq:lcs_distance}
\end{equation}
For sequences of equal length \(n\), this simplifies to
\begin{equation}
\label{eq:lcs_condition}
d_{\mathrm{LCS}}(x,y) = n - \mathrm{LCS}(x,y).
\end{equation}

\subsubsection{Cliques and the Maximum Clique Problem}

A graph \(G = (V,E)\) consists of a vertex set \(V\) and edge set \(E \subseteq V \times V\). 
A \emph{clique} is a subset \(C \subseteq V\) such that every pair of distinct vertices in \(C\) is connected:
\[
\forall\  u,v \in C, \; u \neq v \implies (u,v) \in E.
\]
The \emph{clique number} \(\omega(G)\) is the size of a largest clique in \(G\):
\[
\omega(G) = \max\{\,|C| : C \subseteq V,\ C\ \text{is a clique}\,\}.
\]
A clique \(C^*\) with \(|C^*|=\omega(G)\) is called a \emph{maximum clique}.
The Maximum Clique Problem (MCP) seeks to find such a clique:
\[
\text{find } C^* \subseteq V \quad\text{such that}\quad |C^*|=\omega(G).
\]

\subsection{LCS-Based Characterization of Deletion Correction}

The deletion-correcting constraint~\eqref{deletion_correction} requires that 
\[
\mathcal{D}_d(u) \cap \mathcal{D}_d(v) = \varnothing
\]
for all distinct codewords \(u, v \in \mathcal{C}\).
We now characterize this condition in terms of the longest common subsequence (LCS) distance.

\begin{remark}
\label{constraint_remark}
As established in~\cite{lcs_consrtaint_proof}, the deletion-correcting constraint is equivalent to an LCS-based distance constraint.  
Specifically, a codebook \(\mathcal{C} \subseteq \mathcal{S}\) satisfies~\eqref{deletion_correction} if and only if, for every pair of distinct codewords \(u, v \in \mathcal{C}\),
\begin{equation}
    \mathrm{LCS}(u, v) \le n - (d + 1),
    \label{eq:pairwise_lcs_constraint}
\end{equation}
or equivalently (from~\eqref{eq:lcs_condition}),
\begin{equation}
\label{eq:distance_lcs_condition}
    d_{\mathrm{LCS}}(u, v) \ge d + 1,
\end{equation}
where \(d_{\mathrm{LCS}}(u, v)\) denotes the LCS distance between \(u\) and \(v\).
\end{remark}

\subsection{Construction}
\label{sec:construction}
We now translate the deletion-correcting code construction problem into a MCP and describe our solution approach.

\subsubsection{Reduction to Maximum Clique Problem}

Considering equation~\eqref{eq:pairwise_lcs_constraint}, we construct an undirected graph \( G = (V, E) \) as follows:
\[
V = \{0,1\}^n, \quad (u,v) \in E \iff \mathrm{LCS}(u,v) \le n - d - 1,
\]
or equivalently, using~\eqref{eq:distance_lcs_condition},
\[
(u,v) \in E \iff d_{\mathrm{LCS}}(u,v) \ge d + 1.
\]
Each vertex represents a binary sequence of length \( n \), and an edge connects two sequences if they satisfy the pairwise deletion-correcting constraint. 
By Remark~\ref{constraint_remark}, any clique in \(G\) corresponds to a valid $d$-deletion-correcting codebook, and a maximum clique yields a codebook of maximum size.

\subsubsection{LCS Computation}

Computing the LCS length between two sequences is performed via standard dynamic programming in $O(n^2)$ time for equal-length sequences, as shown in Algorithm~\ref{alg:LCS}.

\begin{algorithm}[t]
\vspace{0.2cm}
\caption{LCS Length Computation}
\label{alg:LCS}
\begin{algorithmic}[1]
\Require Binary sequences \(x, y\) of lengths \(m, n\)
\Ensure \( \mathrm{LCS}(x, y) \)
\State Initialize matrix \( L[0 \ldots m][0 \ldots n] \) with zeros
\For{ \( i = 1 \) to \( m \) }
    \For{ \( j = 1 \) to \( n \) }
        \If{ \( x[i - 1] = y[j - 1] \) }
            \State \( L[i][j] \leftarrow L[i - 1][j - 1] + 1 \)
        \Else
            \State \( L[i][j] \leftarrow \max(L[i - 1][j], L[i][j - 1]) \)
        \EndIf
    \EndFor
\EndFor
\State \Return \( L[m][n] \)
\end{algorithmic}
\end{algorithm}

\subsubsection{Graph Construction and Clique Finding}

Algorithm~\ref{alg:MCP} constructs the graph \( G \) and invokes a clique-finding algorithm to obtain a large codebook.

\begin{algorithm}[t]
\vspace{0.2cm}

\caption{Codebook Construction via Maximum Clique}
\label{alg:MCP}
\begin{algorithmic}[1]
\Require \( n \) (codeword length), \( d \) (deletion correction capability)
\Ensure Large feasible codebook \( \mathcal{C} \)
\State \( \text{threshold} \leftarrow d + 1 \)
\State \( V \leftarrow \{0, 1\}^n \)
\State Initialize empty graph \( G = (V, \emptyset) \)
\For{ each unordered pair \( \{u, v\} \subset V \), \( u \neq v \) }
    \State Compute \( d_{\mathrm{LCS}}(u, v) \) using Algorithm~\ref{alg:LCS}
    \If{ \( d_{\mathrm{LCS}}(u, v) \ge \text{threshold} \) }
        \State Add edge \( (u, v) \) to \( E \)
    \EndIf
\EndFor
\State \( \mathcal{C} \leftarrow \text{FindMaxClique}(G) \)
\State \Return \( \mathcal{C} \)
\end{algorithmic}
\end{algorithm}

\subsubsection{Penalty-Guided Clique Search}
For small instances (\(n \leq 8\)), we can use the Bron--Kerbosch algorithm~\cite{Bron}
to compute an optimal maximum clique. For larger instances where exact enumeration becomes infeasible, we employ PGCS, a lightweight stochastic heuristic for finding large cliques in deletion-compatibility graphs. PGCS is inspired by the dynamic penalty mechanism of Dynamic Local Search (DLS)~\cite{DLS}.

The PGCS heuristic incrementally constructs a clique using greedy expansion
steps whenever feasible. When no further expansion is possible, the algorithm
performs a penalty-guided perturbation by removing a high-penalty vertex or
restarting from a randomly selected vertex, enabling continued exploration
while maintaining low per-iteration complexity. A dynamic penalty mechanism is used to discourage repeated selection of the same
vertices. Each vertex is assigned a penalty that is increased whenever it
appears in the current clique and gradually decayed over time. Candidate
vertices with lower penalties are preferred during clique expansion, promoting
diversification and reducing cycling behavior. The penalty increase and decay
parameters are empirically tuned to balance exploration and exploitation. The algorithm terminates when a prescribed time budget is reached and returns
the largest clique encountered during the search.

\begin{algorithm}[t]
\caption{Penalty-Guided Clique Search (PGCS) for Maximum Clique}
\label{alg:dls-mc}
\begin{algorithmic}[1]
\Require Graph $G=(V,E)$, penalty and time parameters: $\mathrm{penInc}, \mathrm{penDec}$, $\mathrm{maxTime}$
\Ensure Largest clique $C^*$ found within $\mathrm{maxTime}$
\State Initialize $\mathrm{penalty}(v)\gets 0$ for all $v\in V$
\State $C\gets\{\mathrm{RandomChoice}(V)\}$, $C^*\gets C$
\State $t_{\mathrm{start}}\gets$ current time
\While{current time $- t_{\mathrm{start}} < \mathrm{maxTime}$}
    \State $\mathrm{Cand}\gets\{v\in V\setminus C : (v,u)\in E,\ \forall u\in C\}$
    \If{$\mathrm{Cand}\neq\emptyset$}
        \State $v\gets\arg\min_{v\in\mathrm{Cand}}\mathrm{penalty}(v)$
        \State $C\gets C\cup\{v\}$
    \Else
        \If{$|C|>1$}
            \State $v\gets\arg\max_{v\in C}\mathrm{penalty}(v)$
            \State $C\gets C\setminus\{v\}$
        \Else
            \State $C\gets\{\mathrm{RandomChoice}(V)\}$
        \EndIf
    \EndIf
    \ForAll{$v\in C$}
        \State $\mathrm{penalty}(v)\gets\mathrm{penalty}(v)+\mathrm{penInc}$
    \EndFor
    \ForAll{$v\in V$}
        \State $\mathrm{penalty}(v)\gets\mathrm{penalty}(v)\cdot\mathrm{penDec}$
    \EndFor
    \If{$|C|>|C^*|$}
        \State $C^*\gets C$
    \EndIf
\EndWhile
\State \Return $C^*$
\end{algorithmic}
\end{algorithm}

\section{Decoding}
\label{sec:decoding}
We now describe the decoding algorithm that recovers the transmitted codeword from a deletion-corrupted received sequence. Throughout this section, we assume that codeword boundaries in the received bitstream are known (i.e., the receiver can segment the corrupted stream into individual received sequences). This assumption is standard in coded systems~\cite{segmented_decoding} and can be achieved through framing, marker sequences~\cite{Marker_code_marker, DNA_sync}, or synchronization protocols.

\begin{algorithm}[t]
\caption{LCS-Based Decoding}
\label{alg:lcsdecode}
\begin{algorithmic}[1]
\Require Codebook $\mathcal{C}$ with precomputed $(w(c), z(c))$ for each $c \in \mathcal{C}$, received sequence $y$
\Ensure Decoded codeword $\hat{c}$
\State Compute $w_y \gets w(y)$, $z_y \gets z(y)$
\State $\mathcal{C}' \gets \{c \in \mathcal{C} : w(c) \ge w_y \text{ and } z(c) \ge z_y\}$
\State $\hat{c} \gets \text{null}$, $S_{\max} \gets 0$
\For{$c \in \mathcal{C}'$}
    \State $S \gets \mathrm{LCS}(c, y)$ using Algorithm~\ref{alg:LCS}
    \If{$S = |y|$}
        \State \Return $c$ \Comment{Early exit: unique match found}
    \ElsIf{$S > S_{\max}$}
        \State $S_{\max} \gets S$, $\hat{c} \gets c$
    \EndIf
\EndFor
\State \Return $\hat{c}$
\end{algorithmic}
\end{algorithm}

\subsection{LCS-Based Decoder}

Our decoder is the following: given a received sequence $y \in \mathcal{Y}$, it selects the codeword with the highest LCS score:
\begin{equation}
\phi(y) = \arg\max_{u \in \mathcal{C}} \mathrm{LCS}(u, y).
\label{eq:lcs_decoder}
\end{equation}
This approach is similar to reconstruction methods used in DNA storage systems~\cite{DNA_sequence}.

\subsection{Decoder Correctness}
The following lemma formalizes the correctness of LCS-based decoding under the deletion-correcting constraint; it is included for completeness and to support the subsequent complexity analysis.

\begin{lemma}
\label{lemma:decoder_correctness}
Let $\mathcal{C}$ be a $d$-deletion-correcting codebook satisfying equation~\eqref{eq:pairwise_lcs_constraint}. 
The LCS-based decoder \eqref{eq:lcs_decoder} uniquely recovers the transmitted codeword from any received sequence $y \in \mathcal{Y}$.
\end{lemma}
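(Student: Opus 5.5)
The plan is to show two things for a transmitted codeword $u\in\mathcal{C}$ and any $y\in\mathcal{D}_d(u)$ of length $n-t$ with $0\le t\le d$: the transmitted codeword attains the largest possible score, $\mathrm{LCS}(u,y)=|y|$; and every other codeword $v\neq u$ scores strictly less, $\mathrm{LCS}(v,y)<|y|$. Together these make the $\arg\max$ in \eqref{eq:lcs_decoder} a singleton equal to $u$, giving both uniqueness and correctness.

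The first claim is immediate. By the definition of $\mathcal{D}_d(u)$, $y$ is a subsequence of $u$, so $y$ itself is a common subsequence of $u$ and $y$. Since the LCS of any pair cannot exceed the shorter length, this gives $\mathrm{LCS}(u,y)=|y|=n-t$.

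The second claim is the main step, and I would argue it by contradiction. Suppose $\mathrm{LCS}(v,y)=|y|$ for some $v\in\mathcal{C}$ with $v\neq u$. Then $y$ is a subsequence of $v$, and $v$ has length $n$, so $y$ arises from $v$ by deleting exactly $t\le d$ symbols. Hence $y\in\mathcal{D}_d(v)$, and so $y\in\mathcal{D}_d(u)\cap\mathcal{D}_d(v)$, contradicting \eqref{deletion_correction}.

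If one prefers to start from \eqref{eq:pairwise_lcs_constraint} rather than \eqref{deletion_correction}, the equivalence in Remark~\ref{constraint_remark} converts between them. The LCS form can also be used directly. If $y$ were a common subsequence of $u$ and $v$, then $\mathrm{LCS}(u,v)\ge |y| = n-t \ge n-d$. This contradicts $\mathrm{LCS}(u,v)\le n-d-1$. This direct route is cleaner, since it needs only that a common subsequence of $u$ and $v$ of length $n-t$ bounds $\mathrm{LCS}(u,v)$ from below, and so avoids citing the equivalence at all.

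The only subtlety I expect is making the strict inequality explicit, so that the $\arg\max$ is well defined and ties cannot occur. With the bound above, every $v\neq u$ satisfies $\mathrm{LCS}(v,y)\le |y|-1$. It is also worth remarking that this strict gap is exactly what justifies the early exit in Algorithm~\ref{alg:lcsdecode}: a codeword reaching $S=|y|$ must be the transmitted one. Likewise the count filter $w(c)\ge w_y$, $z(c)\ge z_y$ never discards $u$, because deletions cannot increase the number of ones or zeros.
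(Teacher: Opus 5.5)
Your proof is correct, but it places the key step differently from the paper. The paper bounds every wrong codeword's score uniformly. It takes an LCS $s$ of $z$ and $y$ and observes that $s$ is a subsequence of $y$, hence, by transitivity, of $x$. So $s$ is a common subsequence of $x$ and $z$, giving $\mathrm{LCS}(z,y)\le\mathrm{LCS}(x,z)\le n-d-1<n-t$.

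You instead rule out only the extreme value $\mathrm{LCS}(v,y)=|y|$. That value would make $y$ itself a common subsequence of $u$ and $v$, equivalently an element of $\mathcal{D}_d(u)\cap\mathcal{D}_d(v)$. You then get strictness from integrality and the trivial bound $\mathrm{LCS}(v,y)\le|y|$.

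Your argument is the more elementary of the two. It is the paper's argument specialized to $s=y$. It can be run straight from the defining condition \eqref{deletion_correction}, without the equivalence cited in Remark~\ref{constraint_remark}. It also isolates exactly the fact that the early-exit test of Algorithm~\ref{alg:lcsdecode} relies on.

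The paper's argument buys a stronger, $t$-independent conclusion: every $z\neq x$ scores at most $n-d-1$. So the gap to the transmitted codeword is at least $d+1-t$ rather than merely $1$, though the lemma does not need this.

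Incidentally, the paper words its transitivity step through lengths ($|y|\le|x|$, hence $|s|\le|x|$), which alone would not make $s$ a subsequence of $x$. The subsequence chain from $s$ through $y$ to $x$ is what does the work, and your route never needs it for a general $s$.
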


\begin{proof}
Let $x \in \mathcal{C}$ be the transmitted codeword, and suppose the received sequence $y$ results from $t \le d$ deletions. 
Then $|y| = n - t$ and $y$ is a subsequence of $x$, giving
\begin{equation}
\mathrm{LCS}(x,y) = |y| = n - t.
\end{equation}

For any other codeword $z \in \mathcal{C}$ with $z \neq x$, by~\eqref{eq:pairwise_lcs_constraint},
\begin{equation}
\mathrm{LCS}(x,z) \le n - d - 1.
\end{equation}
Now consider $\mathrm{LCS}(z,y)$. Since $y$ is obtained from $x$ by deletions,
we have $|y| \leq |x|$. Let $s$ denote the longest common subsequence of $z$ and $y$,
so that $|s|=\mathrm{LCS}(z,y)$. By definition, $|s| \leq |z|$ and $|s| \leq |y|$.
Since $|y| \leq |x|$, the transitivity of the subsequence relation implies
$|s| \leq |x|$ as well. Hence, $s$ is a common subsequence of $x$ and $z$, and
therefore
\begin{equation}
\mathrm{LCS}(z,y) = |s| \le \mathrm{LCS}(x,z) \le n-d-1.
\end{equation}
Since $t \le d$, it follows that
\begin{equation}
\mathrm{LCS}(z,y) \le n-d-1 < n-d \le n-t = |y| = \mathrm{LCS}(x,y).
\end{equation}
Therefore, $x$ is the unique codeword achieving the maximum LCS score with $y$, and $\phi(y) = x$.
\end{proof}

\subsection{Implementation and Complexity}

The decoder evaluates $\mathrm{LCS}(u,y)$ for each codeword $u \in \mathcal{C}$ using Algorithm~\ref{alg:LCS}, requiring $O(|\mathcal{C}| n^2)$ time and $O(n^2)$ space. 
This is significantly more efficient than enumerating all $\binom{n}{d}$ deletion patterns to reconstruct the original codeword.

\subsubsection{Optimizations}
We introduce two optimizations that preserve correctness while improving the practical efficiency of LCS-based decoding, reducing the computational cost below the baseline $O(|\mathcal{C}|n^2)$ complexity.

\paragraph{Symbol-count filtering.}
Since deletions can only reduce symbol counts, any valid source codeword $c$ must satisfy
\[
w(c) \ge w(y), \quad z(c) \ge z(y),
\]
where $w(\cdot)$ and $z(\cdot)$ denote the counts of ones and zeros, respectively.
This lightweight filtering step costs $O(|\mathcal{C}| n)$ and typically eliminates a large fraction of candidates before performing LCS computation.

\paragraph{Early termination.}
If a codeword $c$ achieves $\mathrm{LCS}(c, y) = |y|$, then $y$ is a subsequence of $c$.
By Lemma~\ref{lemma:decoder_correctness}, this codeword is unique, and decoding can terminate immediately.
Algorithm~\ref{alg:lcsdecode} combines these optimizations.

\begin{table*}[!t]
\vspace{0.2cm}
\centering
\caption{Comparison of Double-Deletion ($d=2$) Correcting Codebook Cardinalities~\cite{Compare,coloring} (Leven. LB = Levenshtein lower bound)}
\label{tab:double_del_results}
\begin{tabular}{|c|c|c|c|c|c|c|c|}
\hline
\textbf{n} & \textbf{Leven. LB} & \textbf{Helberg} & \textbf{Swart} & \textbf{Optimal} & \textbf{Coloring} & \textbf{Min-Deg} & \textbf{PGCS} \\
\hline
3  &0  & 2 & - & 2 & 2 & 2 & 2  \\
\hline
4  &0  & 2 & 2 & 2 & 2 & 2 & 2  \\
\hline
5  &0  & 2 & 2 & 2 & 2 & 2 & 2  \\
\hline
6  &0  & 3 & 4 & 4 & 4 & 4 & 4  \\
\hline
7  &0  & 4 & 5 & 5 & 5 & 5 & 5  \\
\hline
8  &1  & 5 & 7 & 7 & 6 & 7 & 7  \\
\hline
9  &1  & 6 & 10 & 11 & 8 & 10 & 11 \\
\hline
10 &1 & 8 & 14 & 16 & 12 & 15 & 16  \\
\hline
11 &2 & 9 & 20 & 24 & 17 & 21 & 23 \\
\hline
12 &3 & 11 & 29 & - & 27 & 32 & 35 \\
\hline
13 &4 & 15 & - & - & 40 & 49 & 53  \\
\hline
14 &6 & 18 & - & - & 60 & 78 & 81 \\
\hline
\end{tabular}
\end{table*}
\label{sec:results}

\begin{table*}[!t]
\centering
\caption{Comparison of Codebook Cardinalities for $d \in \{1, 3, 4\}$~\cite{Compare}}
\label{tab:multi_del_results}
\begin{tabular}{|c||c|c|c||c|c|c||c|c|c|}
\hline
\multirow{2}{*}{$n$}
 & \multicolumn{3}{c||}{$d=1$} & \multicolumn{3}{c||}{$d=3$} & \multicolumn{3}{c|}{$d=4$} \\
\cline{2-10}
 & \textbf{VT} & \textbf{Min-Deg} & \textbf{PGCS} & \textbf{Helberg} & \textbf{Min-Deg} & \textbf{PGCS} &  \textbf{Helberg} & \textbf{Min-Deg} &\textbf{PGCS}  \\
\hline
2 &2  &2  &2  &-  &-  &-  &-  &- &-\\
\hline
3 &2  &2  &2  &-  &-  &-  &-  &- &-\\
\hline
4 &4  &4  &4  &2  &2  &2  &-  &- &-\\
\hline
5 &6  &6  &6  &2  &2  &2  &2  &2 &2\\
\hline
6 &10 &10 &10  &2  &2  &2  &2 &2 &2\\
\hline
7 &16 &15 &16 &2 &2 &2 &2 &2 &2  \\
\hline
8 &30 &26 &28 &3 &4 &4 &2 &2  &2  \\
\hline
9 &52 &43 &45 &4 &5 &5 &2 &2 &2  \\
\hline
10 &94 &76 &80 &4 &6 &6 &3 &4 &4  \\
\hline
11 &170 &130 &141 &5 &8 &8 &4 &5 &5  \\
\hline
\end{tabular}
\end{table*}

\section{Results and Discussion}

\subsection{Codebook Construction Results}
We present experimental results comparing the proposed PGCS-based approach with existing methods from the literature.

Table~\ref{tab:double_del_results} shows codebook sizes for double-deletion correction ($d=2$) and codeword lengths $n=3$ to $14$. 
The proposed PGCS-based construction consistently achieves larger codebooks than Helberg's construction~\cite{Helberg}, Swart's heuristic~\cite{Swart}, the minimum-degree algorithm~\cite{Compare}, and the vertex-coloring method~\cite{coloring}.

Table~\ref{tab:multi_del_results} presents results for other deletion levels $d \in \{1, 3, 4\}$. 
For $d=1$, we compare against VT codes~\cite{VT_original}; for $d \geq 3$, against Helberg codes~\cite{Helberg}.
The minimum-degree algorithm~\cite{Compare} serves as a baseline across all deletion levels.

\subsubsection{Implementation Details and Reproducibility}
\label{sec:implementation}

All experiments were conducted on a machine with an Intel Core i5--1335U CPU (13th
Gen) and 16~GB RAM.  Since the proposed PGCS algorithm is stochastic,
each $(n,d)$ instance was evaluated over $10$ independent runs with different
random seeds, each with a runtime budget of $120$ seconds. Tables~\ref{tab:double_del_results}
and~\ref{tab:multi_del_results} report the maximum codebook size obtained across
these runs, corresponding to the largest feasible codebook discovered within the
fixed computational budget. 
We observed that variability across runs was empirically small for most instances,
with the best and median solutions typically coinciding.

The PGCS penalty parameters $(\mathrm{penInc}, \mathrm{penDec})$, which govern the
dynamic penalty mechanism used for vertex selection, were chosen from a small set
of candidate values based on preliminary experiments and then fixed for all runs
of a given $(n,d)$ configuration. The parameter $\mathrm{penInc}$ controls the
magnitude of the penalty assigned to vertices participating in the current
clique, while $\mathrm{penDec}$ determines how penalty values evolve over time.
Together, these parameters regulate the influence of accumulated penalties on
the vertex selection process.

Our experiments indicate that suitable parameter choices correlate with the
density of the underlying clique graph induced by the deletion constraint. For
moderately sparse graphs (e.g., $d=2$ with $n \leq 14$), the parameter setting
$(\mathrm{penInc}, \mathrm{penDec}) = (0.95, 0.8)$ consistently yielded strong
performance across all tested instances. For denser graphs (e.g., $d=1$),
alternative parameter settings were empirically observed to perform better; for
example, $(\mathrm{penInc}, \mathrm{penDec}) = (0.4, 0.95)$ was used for
$(n,d)=(11,1)$. These values were selected empirically and are reported for
reproducibility rather than to suggest a universal parameter-selection rule.

We adopt PGCS rather than a full Dynamic Local Search (DLS) implementation
because the clique graphs arising from deletion-correcting code construction
exhibit substantial structure and are significantly smaller than the benchmark
graphs typically targeted by full DLS solvers. In this setting, the streamlined
penalty-guided add--remove strategy of PGCS achieves competitive or superior
solutions at substantially lower computational overhead, making it well suited
for repeated evaluation across multiple $(n,d)$ configurations.

\subsubsection{Scalability Limitations}
In our implementation, we explicitly construct and store the compatibility graph $G = (V,E)$ with $|V| = 2^n$ vertices and $|E| = O(4^n)$ edges.
The graph construction dominates both time and memory complexity, scaling exponentially in $n$ regardless of the clique-finding algorithm used.
Although prior works report results for $n \le 20$~\cite{Compare}, we restrict our experiments to $n \le 14$ due to these computational constraints.
Within this range, PGCS consistently yields larger or equal-sized codebooks
compared to existing methods, demonstrating its effectiveness as a practical
maximum clique solver for this problem.

\subsection{Decoding Performance}

We evaluate the practical decoding complexity of the proposed LCS-based decoder through Monte Carlo simulations under segmented reception, where codeword boundaries are known.

\subsubsection{Illustrative Example.}
We first illustrate the decoding process using the representative configuration $(n,d) = (13,2)$. Consider a transmitted codeword $\texttt{1000111110110}$ of length $n=13$, containing $8$ ones and $5$ zeros. After two random deletions, the received sequence becomes $\texttt{10011111010}$, which contains $7$ ones and $4$ zeros. In the first stage of decoding, symbol-count filtering is applied: only those codewords whose counts of ones and zeros are at least as large as those in the received sequence are retained as candidates. In this example, $31$ out of $50$ codewords (obtained in a specific run) fail this necessary condition and are discarded, leaving $19$ candidates. Separately, a Monte Carlo simulation yields an average search-space reduction of over $60\%$ for $d=2$, where the number of deletions takes values in $\{0,1,2\}$, and, conditioned on the number of deletions, all possible deletion patterns are assumed to be uniformly distributed.

Next, LCS-based decoding is performed over the remaining candidates. Since the transmitted codeword achieves an LCS score equal to the received sequence length ($\mathrm{LCS}=11=|y|$), it is immediately identified as the correct codeword via the early-termination criterion. This example illustrates how the combination of lightweight pre-filtering and early termination substantially reduces the number of LCS computations without affecting decoding correctness.

\section{Conclusion}
\label{sec:conclusion}
We investigated the construction of deletion-correcting codes through a maximum
clique formulation and demonstrated that the proposed
\emph{Penalty-Guided Clique Search} heuristic yields consistently improved
results compared to simpler polynomial-time graph-based construction methods.
For finite block lengths $n \leq 14$ and deletion parameters $d \in \{1,2,3\}$,
the PGCS-based approach produces larger codebooks than previously reported
clique-based heuristics, including minimum-degree and coloring methods, as well
as established finite-length constructions such as Helberg codes and related
variants. In several instances, the resulting codebooks match known optimal
sizes for the corresponding $(n,d)$ parameters. We also proposed an optimized LCS-based decoder that significantly reduces
practical decoding complexity by incorporating lightweight symbol-count
filtering and early termination, while preserving exact decoding guarantees
under segmented reception.

The primary computational bottleneck of the proposed framework lies in the
construction of the compatibility graph, which requires evaluating pairwise LCS
distances among $2^n$ candidate sequences and storing up to $O(2^{2n})$ edges.
This currently limits practical applicability to moderate block lengths.
Future research directions include exploring alternative maximum clique
solvers and penalty-based local search heuristics inspired by DLS to further improve solution quality and scalability.
\balance
\bibliographystyle{IEEEtran}
\bibliography{references}

\begin{thebibliography}{10}
\providecommand{\url}[1]{#1}
\csname url@samestyle\endcsname
\providecommand{\newblock}{\relax}
\providecommand{\bibinfo}[2]{#2}
\providecommand{\BIBentrySTDinterwordspacing}{\spaceskip=0pt\relax}
\providecommand{\BIBentryALTinterwordstretchfactor}{4}
\providecommand{\BIBentryALTinterwordspacing}{\spaceskip=\fontdimen2\font plus
\BIBentryALTinterwordstretchfactor\fontdimen3\font minus \fontdimen4\font\relax}
\providecommand{\BIBforeignlanguage}[2]{{%
\expandafter\ifx\csname l@#1\endcsname\relax
\typeout{** WARNING: IEEEtran.bst: No hyphenation pattern has been}%
\typeout{** loaded for the language `#1'. Using the pattern for}%
\typeout{** the default language instead.}%
\else
\language=\csname l@#1\endcsname
\fi
#2}}
\providecommand{\BIBdecl}{\relax}
\BIBdecl

\bibitem{symbol_sync}
\BIBentryALTinterwordspacing
H.~Mercier, ``{Communication over channels with symbol synchronization errors},'' Ph.D. dissertation, University of British Columbia, 2008. [Online]. Available: \url{https://open.library.ubc.ca/collections/ubctheses/24/items/1.0066430}
\BIBentrySTDinterwordspacing

\bibitem{DNA_storage}
Z.~Yan, C.~Liang, and H.~Wu, ``{Upper and Lower Bounds on the Capacity of the DNA-Based Storage Channel},'' \emph{IEEE Commun. Lett.}, vol.~26, no.~11, pp. 2586--2590, Nov 2022.

\bibitem{Compare}
F.~Khajouei, M.~Zolghadr, and N.~Kiyavash, ``{An algorithmic approach for finding deletion correcting codes},'' in \emph{IEEE ITW}, Oct 2011, pp. 25--29.

\bibitem{VT_original}
R.~R. Varshamov and G.~M. Tenengol'ts, ``Code correcting single asymmetric errors,'' \emph{Avtomatika i Telemekhanika}, vol.~26, no.~2, pp. 288--292, 1965.

\bibitem{VT_code}
M.~Abroshan, R.~Venkataramanan, and A.~G.~I. Fabregas, ``{Efficient Systematic Encoding of Non-binary VT Codes},'' in \emph{IEEE ISIT}, June 2018, pp. 91--95.

\bibitem{Helberg}
A.~Helberg and H.~Ferreira, ``{On multiple insertion/deletion correcting codes},'' \emph{IEEE Trans. Inf. Theory}, vol.~48, no.~1, pp. 305--308, Jan 2002.

\bibitem{LeNyugen}
T.~A. Le and H.~D. Nguyen, ``New multiple insertion/deletion correcting codes for non-binary alphabets,'' \emph{IEEE Trans. Inf. Theory}, vol.~62, no.~5, pp. 2682--2693, May 2016.

\bibitem{coloring}
D.~Cullina, A.~A. Kulkarni, and N.~Kiyavash, ``{A coloring approach to constructing deletion correcting codes from constant weight subgraphs},'' in \emph{IEEE ISIT}, July 2012, pp. 513--517.

\bibitem{DLS}
W.~Pullan and H.~H. Hoos, ``{Dynamic local search for the maximum clique problem},'' \emph{J. Artif. Int. Res.}, vol.~25, no.~1, p. 159–185, Feb. 2006.

\bibitem{DNA_sequence}
\BIBentryALTinterwordspacing
O.~Sabary, A.~Yucovich, G.~Shapira, and E.~Yaakobi, ``{Reconstruction Algorithms for DNA-Storage Systems},'' \emph{bioRxiv}, 2020. [Online]. Available: \url{https://www.biorxiv.org/content/early/2020/09/17/2020.09.16.300186}
\BIBentrySTDinterwordspacing

\bibitem{segmented_decoding}
Z.~Liu and M.~Mitzenmacher, ``{Codes for Deletion and Insertion Channels With Segmented Errors},'' \emph{IEEE Trans. Inf. Theory}, vol.~56, no.~1, pp. 224--232, Jan 2010.

\bibitem{DNA_sync}
Z.~Yan, C.~Liang, and H.~Wu, ``{A Segmented-Edit Error-Correcting Code With Re-Synchronization Function for DNA-Based Storage Systems},'' \emph{IEEE Trans. Emerg. Top. Comput.}, vol.~11, no.~3, pp. 605--618, July 2023.

\bibitem{Marker_code_marker}
Z.~Li, X.~He, and X.~Tang, ``Marker+codeword+marker: A coding structure for segmented single-insdel/-edit channels,'' \emph{IEEE Trans. Commun.}, pp. 1--1, 2025.

\bibitem{DNA_deletion}
J.~Haghighat and T.~M. Duman, ``{Half-Marker Codes for Deletion Channels With Applications in DNA Storage},'' \emph{IEEE Commun. Lett.}, vol.~29, no.~7, pp. 1639--1643, July 2025.

\bibitem{lcs_consrtaint_proof}
\BIBentryALTinterwordspacing
R.~Con, Z.~Guo, R.~Li, and Z.~Zhang, ``{Random Reed-Solomon Codes Achieve the Half-Singleton Bound for Insertions and Deletions over Linear-Sized Alphabets},'' 2024. [Online]. Available: \url{https://arxiv.org/abs/2407.07299}
\BIBentrySTDinterwordspacing

\bibitem{Bron}
\BIBentryALTinterwordspacing
C.~Bron and J.~Kerbosch, ``{Algorithm 457: finding all cliques of an undirected graph},'' \emph{Commun. ACM}, vol.~16, no.~9, p. 575–577, Sep. 1973. [Online]. Available: \url{https://doi.org/10.1145/362342.362367}
\BIBentrySTDinterwordspacing

\bibitem{Swart}
T.~Swart and H.~Ferreira, ``{A note on double insertion/deletion correcting codes},'' \emph{IEEE Trans. Inf. Theory}, vol.~49, no.~1, pp. 269--273, Jan 2003.

\end{thebibliography}

\end{document}